\newtheorem{theorem}{Theorem}
\newtheorem{proposition}{Proposition}
\theoremstyle{plain}
\begin{document}
%

\title{Channel Modeling and Rate Analysis of Optical Inter-Satellite Link (OISL)}
%
%
%

\author{\IEEEauthorblockN{Bodong Shang,~\IEEEmembership{Member,~IEEE}}, Shuo Zhang, and Zi Jing Wong
\thanks{B. Shang, S. Zhang, and Z J. Wong are with Eastern Institute for Advanced Study, Eastern Institute of Technology, Ningbo 315200, China (E-mails: bdshang@eitech.edu.cn, shuozhang@eitech.edu.cn, zijing@eitech.edu.cn).}}

\maketitle

\begin{abstract}
Optical inter-satellite links (OISLs) improve connectivity between satellites in space. They offer advantages such as high-throughput data transfer and reduced size, weight, and power requirements compared to traditional radio frequency transmission. However, the channel model and communication performance for long-distance inter-satellite laser transmission still require in-depth study. 
In this paper, we first develop a channel model for OISL communication within non-terrestrial networks (NTN) by accounting for pointing errors caused by satellite jitter and tracking noise. We derive the distributions of the channel state arising from these pointing errors and calculate their average value.
Additionally, we determine the average achievable data rate for OISL communication in NTN and design a cooperative OISL system, highlighting a trade-off between concentrating beam energy and balancing misalignment. We calculate the minimum number of satellites required in cooperative OISLs to achieve a targeted data transmission size while adhering to latency constraints. This involves exploring the balance between the increased data rate of each link and the cumulative latency across all links.
Finally, simulation results validate the effectiveness of the proposed analytical model and provide insights into the optimal number of satellites needed for cooperative OISLs and the optimal laser frequency to use.
\end{abstract}

\begin{IEEEkeywords}
Optical inter-satellite link, non-terrestrial networks, inter-satellite communication, channel model.
\end{IEEEkeywords}

\IEEEpeerreviewmaketitle

\section{Introduction}

\IEEEPARstart{N}{on}-terrestrial network (NTN), particularly those using low-Earth orbit (LEO) satellites, provide widespread wireless connectivity through radio frequency (RF) signal transmissions. It’s important to note that these satellites are expected to be interconnected to transfer data to designated ground stations for internet access. Typically, ground stations are located in fixed and constrained areas. Therefore, establishing inter-satellite links (ISLs) among satellites is crucial for various applications within NTNs.

Optical ISLs (OISLs) are developed using laser communications to achieve very high-throughput data transfer \cite{10462050}. Unlike the congested wireless RF spectrum, which includes bands such as the S-band, Ka-band, and Ku-band, the infrared portion of the electromagnetic spectrum used in OISL—ranging from approximately 300 gigahertz (GHz) to around 430 terahertz (THz)—provides an exceptional bandwidth. This characteristic enhances the potential for encoding more data into the waveform. In comparison to RF links, OISL can gather more energy, allowing for a reduction in the size, weight, and power requirements of the laser transmitter and detector \cite{9855659}.

In contrast to laser communication between a satellite and a ground station, the propagation of the OISL signal is primarily influenced by pointing errors. These errors can significantly reduce the power of the received signal at the detector, underscoring the importance of addressing this issue. Pointing errors in OISL arise from the satellite's jitter and tracking noise \cite{7553489}. These sources of misalignment collectively contribute to the vibration of the pointing direction.

\textbf{Prior Art:} 
Several studies in the literature have explored the channel model for satellite optical communications. In \cite{barry1985beam} and \cite{20099}, beam pointing errors were introduced and modeled using Gaussian distributions for both elevation and horizontal directions. The authors of \cite{arnon1997laser} examined the effects of vibrations on the bit error rate in satellite optical communications. However, these works did not consider how the beam waist changes with propagation distance, reducing the detector's received energy. Additionally, they assumed that photons radiate omnidirectionally in the channel model, overlooking the directional characteristics of lasers.
In \cite{4267802}, the combined effects of air turbulence and jitter on outage probabilities in terrestrial free-space optical (FSO) links were investigated. Another study in \cite{5955149} analyzed the impact of Hoyt-distributed pointing errors on the error performance of on-off keying optical signals. Nonetheless, these studies focused on terrestrial FSO links that accounted for atmospheric turbulence and pointing errors.
The channel statistics can be revisited in more manageable forms in the context of OISL. Moreover, the existing literature should address the average achievable data rate for OISL using laser beams across various applications. Therefore, developing an analytical model for OISL is essential to fully characterize its overall communication performance in NTN.

\textbf{Contributions:} 
The main contributions of this paper are summarized as follows.
\begin{itemize} 
\item \textit{OISL Channel Model:} We establish a channel model for OISL based on a Gaussian beam. The statistical characteristics of the OISL channel are derived by considering the effects of pointing errors and the detector's sensitivity threshold. Additionally, we derive the simplified forms of the probability density function (PDF), the cumulative distribution function (CDF), and the average channel state for the OISL channel, using Rayleigh-distributed radial deviation. Furthermore, we determine the maximum radial deviation distance associated with a Gaussian beam.

\item \textit{Cooperative OISLs Design:} We designed a cooperative OISL communication system where satellites forward data between source and destination satellites by using OISLs.
With the developed OISL channel model, we can accurately derive the average achievable data rate of an OISL communication, facilitating performance analysis of a cooperative OISL communication system.
Moreover, we introduce optimizing the number of OISL relaying satellites and laser frequency to guarantee a latency constraint and a targeted data transmission size.

\item \textit{OISLs Design Insights:} Regarding the average achievable data rate, adjusting the laser frequency or beam waist introduces a trade-off between concentrating beam energy and balancing misalignment. A reduced beam waist could potentially increase the detector's received power intensity.
However, reducing the beam waist causes the detector to deviate from the beam center and decrease the received power intensity.
Furthermore, changing the number of satellites in a cooperative OISL communication system results in a trade-off between the increased data rate of each OISL link and the sum of latency over all cooperative links.
\end{itemize}

The remainder of this paper is organized as follows.
Section \ref{system model} provides a system model. Section \ref{channel statistics} introduces the channel statistics of the OISL. Section \ref{optimization} demonstrates the achievable data rate and the total latency. Simulation results are given in Section \ref{simulations}, and the paper is concluded in Section \ref{conclusions}.

\captionsetup{font={scriptsize}}
\begin{figure*}[t!]
\begin{center}
\setlength{\abovecaptionskip}{+0.2cm}
\setlength{\belowcaptionskip}{-0.5cm}
\centering
  \includegraphics[width=6.26in, height=4.7in]{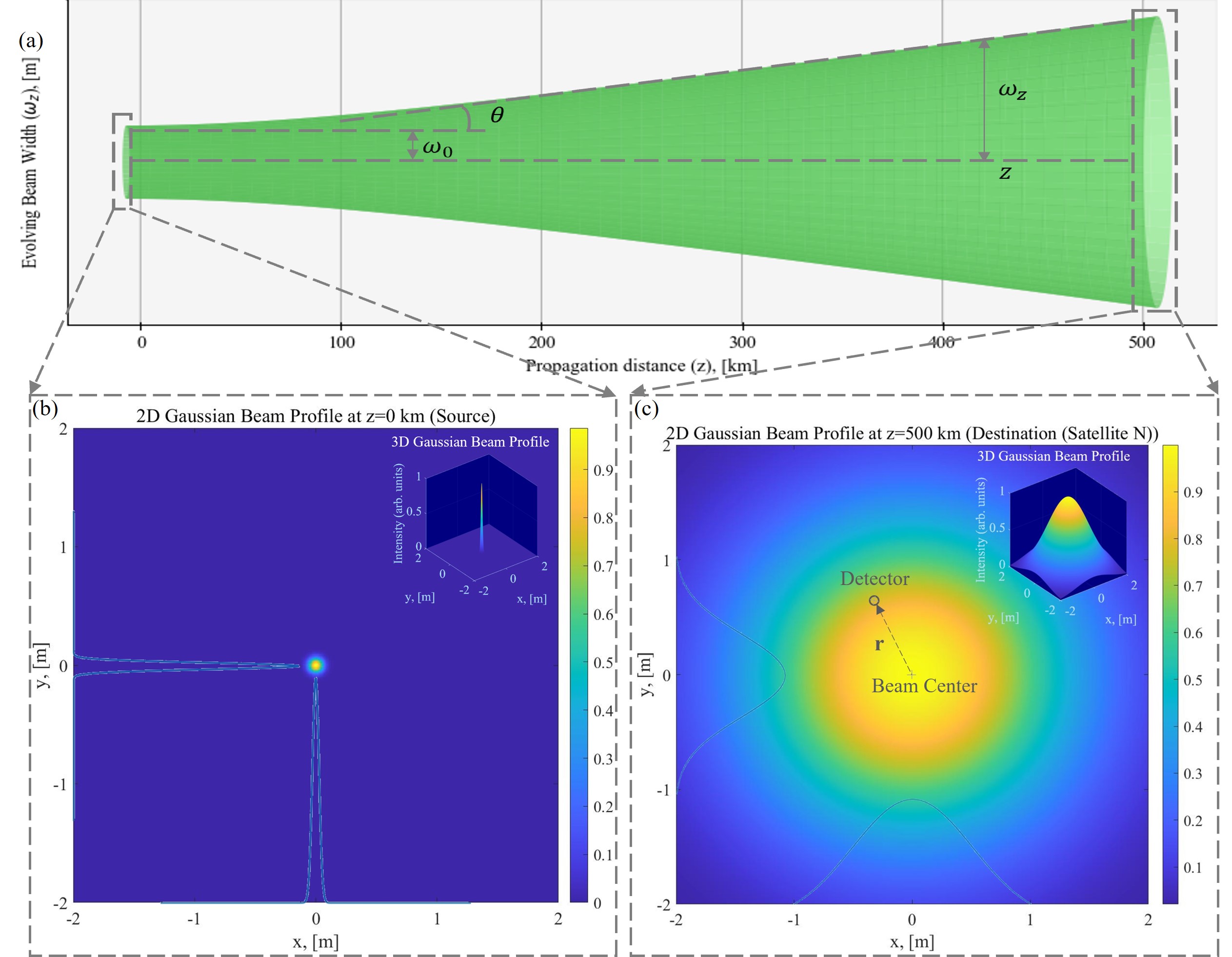}
\renewcommand\figurename{Fig}
\caption{\scriptsize An illustration of the laser beam in OISL, where $w_0=w_d=0.1$ m, the frequency of infrared light is 200 THz.}
\label{fig1}
\end{center}
\end{figure*}

\vspace{-2mm}
\section{System Model}
\label{system model}
In OISL, the received signal power $y_s$ at the detector suffers from a fluctuation in signal power, which is modeled by
\begin{equation}
\begin{aligned}
    {y_s} = h{\eta}{x_s} + {n_0}, \label{II-1}
\end{aligned}
\end{equation}
where $h$ denotes the channel state due to the expanded beam, path loss, and pointing errors, $\eta$ is the detector responsivity, ${x_s}$ is the power of the transmitted signal, and $n_0$ is signal-independent additive white Gaussian noise with variance ${\sigma _n}^2$. 
Specifically, $h$ depends on two factors \cite{4267802} shown as follows
\begin{equation}
\begin{aligned}
    h = {h_{PL}}{h_{PE}},
\end{aligned}
\end{equation}
where $h_{PL}$ encompasses the deterministic path loss which approaches one in space, since the laser beam does not propagate omnidirectionally and space is in a vacuum state \cite{hemmati2006deep}, and $h_{PE}$ indicates the channel random attenuation caused by pointing errors.
This model is suitable for any satellite or spacecraft with OISL devices.
We assume that the laser beam can be tracked and pointed with misalignment \cite{hemmati2006deep}, and the Doppler shift due to satellite movement can be well compensated.

In Fig.1, the beam waist $w_z$ at distance $z$ is given by
\begin{equation}
\begin{aligned}
    {w_z} \approx z\tan \theta  + {w_0} ,
\end{aligned}
\vspace{-2mm}
\end{equation}
where $\theta  = \frac{\lambda }{{\pi {w_0}}}$ is the beam divergence angle \cite{pampaloni2004gaussian}, $\lambda$ denotes the wavelength, and $w_0$ is the beam waist at $z=0$ m. 
Considering the long propagation distance and diffused beam, we approximate $w_z$ as follows
\begin{equation}
\begin{aligned}
    {w_z} \approx z\tan \theta . \label{II-3}
\end{aligned}
\vspace{-2mm}
\end{equation}

In Fig. 1, ${\bf{r}}$ denotes the radial vector from the beam center, i.e., pointing error, and we assume a large field of view at the receiver with no angular fluctuation \cite{8030346}.
The power intensity distribution $I\left( {{\bf{r}},z} \right)$ of a Gaussian beam \cite{4267802} at distance $z$ is
\begin{equation}
\begin{aligned}
    I\left( {{\bf{r}},z} \right) = \frac{2}{{\pi {w_z}^2}}\exp \left( { - \frac{{2{{\left\| {\bf{r}} \right\|}^2}}}{{{w_z}^2}}} \right),0 \le \left\| {\bf{r}} \right\|. \label{II-4}
\end{aligned}
\vspace{-2mm}
\end{equation}
Since the shapes of the transverse plane of the beam and detector are symmetrical, $h_{PE}$ depends on the radial deviation distance of the pointing error, i.e., $r = \left\| {\bf{r}} \right\|$, which is given by
\begin{equation}
\begin{aligned}
    {h_{PE}}\left( {r,z} \right) &= \int\limits_{{\bf{r}} \in {{\cal A}_d}} {I\left( {{\bf{r}},z} \right)} d{\bf{r}}\\
    &= \int_{ - {w_z}}^{{w_z}} {\int_0^{\sqrt {{w_d}^2 - {x^2}} } {\frac{4}{{\pi {w_z}^2}}{e^{ - 2\frac{{{{\left( {x - r} \right)}^2} + {y^2}}}{{{w_z}^2}}}}} dy} dx , \label{II-5}
\end{aligned}
\vspace{-2mm}
\end{equation}
where ${{\cal A}_d}$ is the area of the detector, $w_d$ is the radius of the detector, and we have ${{\cal A}_d} = \pi {w_d}^2$.

The main notations are summarized in Table 1.

\begin{table}[t]
\setlength{\abovecaptionskip}{0cm}
\captionsetup{font={scriptsize}}
\caption{Notations and Description}
\label{notation}
\begin{center}
\small
\begin{tabular}{c|p{6.4cm}}
\hline
\hline
\textbf{Notations } &  \textbf{Description} \\ 
\hline
\hline
$h$ & Channel power gain in the OISL \\
$n_0$ & Additive white Gaussian noise with variance ${\sigma _n}^2$ \\
$h_{PE}$ & Channel random attenuation with pointing errors \\
$w_z$ & Beam waist of the OISL at a distance $z$ \\
$\theta$ &  Beam divergence angle \\
$r$ & The radial deviation distance from the beam center \\
$w_0$ & Beam waist at the transmitter \\
$w_d$ & Radius of the detector \\
${\sigma _s}^2$ & The variance of pointing errors \\
$p_{th}$ & Sensitivity threshold at the detector \\
$h_{th}$ & Channel state average threshold at the detector \\
${r_{\max }}$ & The maximum radial deviation distance \\
${\overline h _{PE}}$ & The average channel state caused by pointing error \\
$D$ & Communication data size of OISL \\
$L$ & Communication distance of OISL\\
$N$ & The number of satellites in cooperation \\
$T_{th}$ & Total latency threshold \\
${\cal R}_n$ & Average data rate of the n-th OISL \\
\hline
\hline
\end{tabular}
\vspace{-2.0em} 
\end{center}
\end{table}

\section{OISL Channel Statistics}
\label{channel statistics}
In this section, we derive the OISL channel statistical characteristics.

\begin{proposition}
    In OISL, we have $\frac{{{w_z}}}{{{w_d}}} \gg 1$. The channel state caused by the pointing error is given by
    \begin{equation}
    \begin{aligned}
        {h_{PE}}\left( {r,z} \right) = \frac{{2{w_d}^2}}{{{w_z}^2}}\exp \left( { - \frac{{2{r^2}}}{{{w_z}^2}}} \right),{\rm{ }}0 \le r , \label{III-1}
    \end{aligned}
\vspace{-2mm}
    \end{equation}
    where $r$ is the radial deviation distance of the pointing error.
\end{proposition}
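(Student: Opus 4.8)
The plan is to start from the exact expression \eqref{II-5} and exploit the single hypothesis $w_z/w_d \gg 1$, which says that the detector aperture (radius $w_d$) is much smaller than the beam spot (radius $w_z$). Geometrically, the intensity profile in \eqref{II-4} is centered at the beam axis while the integration domain is the detector disk; writing the beam--detector offset as $r$ along the $x$-axis puts the exponent in the form $-2\big((x-r)^2+y^2\big)/w_z^2$, with $(x,y)$ ranging over the aperture $x^2+y^2\le w_d^2$ (note that the constraint $\sqrt{w_d^2-x^2}\in\mathbb{R}$ already restricts the effective $x$-range to $[-w_d,w_d]$). Since $|x|,|y|\le w_d \ll w_z$ on this domain, the Gaussian factor varies only by a relative amount of order $(w_d/w_z)^2$ across the aperture, so to leading order it may be replaced by its value at the detector center $(x,y)=(0,0)$, namely $\exp(-2r^2/w_z^2)$.

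The key step is to justify this replacement by a first-order argument rather than a crude ``pull the constant out.'' I would Taylor-expand $\exp\!\big(-2((x-r)^2+y^2)/w_z^2\big)$ about the origin. The constant term is $\exp(-2r^2/w_z^2)$; the linear term is proportional to $x$ (and to $y$), and because the detector disk is centered at the origin one has $\int x\,d\mathbf{r} = \int y\,d\mathbf{r} = 0$, so the linear corrections integrate to zero identically. The surviving correction is quadratic and contributes only a relative error of order $r^2 w_d^2/w_z^4$, which is negligible whenever $r \lesssim w_z$; this is what lets the leading term dominate in the stated regime.

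With the exponential factored out, the remaining integral is simply the prefactor $2/(\pi w_z^2)$ times the area of the detector, $\int_{\mathbf{r}\in\mathcal{A}_d} d\mathbf{r} = \pi w_d^2$. Equivalently, one can carry out the calculation in the two nested stages of \eqref{II-5}: the inner $y$-integral obeys $2y^2/w_z^2 \le 2w_d^2/w_z^2 \ll 1$, so $e^{-2y^2/w_z^2}\approx 1$ and $\int_0^{\sqrt{w_d^2-x^2}} dy = \sqrt{w_d^2-x^2}$; the outer integral then reduces to $e^{-2r^2/w_z^2}\int_{-w_d}^{w_d}\sqrt{w_d^2-x^2}\,dx = \tfrac{\pi}{2}w_d^2\,e^{-2r^2/w_z^2}$, where again the odd-in-$x$ piece of the exponential vanishes by symmetry. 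Multiplying by $4/(\pi w_z^2)$ gives $\tfrac{2w_d^2}{w_z^2}\exp(-2r^2/w_z^2)$, which is \eqref{III-1}.

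I expect the main obstacle to be bookkeeping the approximation cleanly: one must be careful that the constant value pulled out is the intensity at the detector \emph{center} (the origin), not at the beam center, and must invoke the symmetry of the aperture to kill the first-order term, since otherwise the $-2rx/w_z^2$ cross term looks as though it should contribute at the same order as the leading term. Pinning down the range of $r$ over which the quadratic remainder stays small is the only genuinely analytical point; everything else is elementary Gaussian and disk integration.
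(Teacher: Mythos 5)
Your proposal is correct and follows essentially the same route as the paper, which simply approximates the intensity as uniform over the detector and writes ${h_{PE}}\left( {r,z} \right) \approx \pi {w_d}^2 I\left( {r,z} \right)$ with $I$ evaluated at the detector center. Your additional observations—that the first-order cross term $4rx/{w_z}^2$ integrates to zero by the symmetry of the aperture and that the surviving correction is only second order in $w_d/w_z$—supply a justification the paper leaves implicit, but the underlying argument is the same.
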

\begin{proof}
    Since the inter-satellite distances range from a few hundred kilometers to tens of thousands of kilometers, the beam waist is much larger than the radius of the detector, i.e., ${w_z} \gg {w_d}$.
    Then, we approximate that the intensity of received photon is equal across the detector, and we have ${h_{PE}}\left( {r,z} \right) \approx \pi {w_d}^2I\left( {r,z} \right)$, where $I\left( {r,z} \right)$ is given in (\ref{II-4}).
\end{proof}

\begin{theorem}
    The PDF of $h_{PE}$ at distance $z$ is given by
    \begin{equation}
    \begin{aligned}
        {f_{{h_{PE}}}}\left( y \right) &= \frac{{{w_z}}}{{2y\sqrt {2\ln \left( {\frac{{2{w_d}^2}}{{{w_z}^2y}}} \right)} }}\\
        &\cdot {f_R}\left( {\sqrt { - \frac{{{w_z}^2}}{2}\ln \left( {\frac{{{w_z}^2y}}{{2{w_d}^2}}} \right)} } \right),{\rm{ }} 0 < y \le {A_0} , \label{III-3}
    \end{aligned}
    \end{equation}
    where ${f_R}\left( r \right)$ is the PDF of radial deviation distance $r$, ${A_0} = \frac{{2{w_d}^2}}{{{w_z}^2}}$ is the maximum collected power without pointing error.
    \begin{proof}
        Let $Y = {h_{PE}}\left( {r,z} \right)$, and we have
        \begin{equation}
        \begin{aligned}
            r = g\left( y \right) = \sqrt { - \frac{{{w_z}^2}}{2}\ln \left( {\frac{{{w_z}^2y}}{{2{w_d}^2}}} \right)}  . \label{III-4}
        \end{aligned}
        \end{equation}
        Furthermore, according to \cite{devore2000probability}, we have 
        \begin{equation}
        \begin{aligned}
            {f_{{h_{PE}}}}\left( y \right) = \left\{ {\begin{array}{*{20}{l}}
            {{f_R}\left( {g\left( y \right)} \right) \cdot \left| {\frac{{dg\left( y \right)}}{{dy}}} \right|,{\rm{ }} 0 < y \le {A_0} }\\
            {0,{\rm{ otherwise}}}
            \end{array}} \right.  , \label{III-5}
        \end{aligned}
        \end{equation}
        where 
        \begin{equation}
        \begin{aligned}
            \frac{{dg\left( y \right)}}{{dy}} = \frac{{ - {w_z}}}{{2y\sqrt { - 2\ln \left( {\frac{{{w_z}^2y}}{{2{w_d}^2}}} \right)} }}  . \label{III-6}
        \end{aligned}
        \end{equation}
        By substituting (\ref{III-4}) and (\ref{III-6}) into (\ref{III-5}), we obtain (\ref{III-3}).
    \end{proof}
\end{theorem}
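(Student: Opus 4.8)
The plan is to treat this as a univariate transformation-of-random-variables problem, since Proposition~1 already expresses $h_{PE}$ as an explicit deterministic function of the single random variable $r$. The key structural observation is that the map $r \mapsto y = h_{PE}(r,z) = \frac{2w_d^2}{w_z^2}\exp(-2r^2/w_z^2)$ is strictly decreasing on $[0,\infty)$: it attains its maximum $A_0 = 2w_d^2/w_z^2$ at $r=0$ and decays monotonically to $0$ as $r\to\infty$. This strict monotonicity guarantees a well-defined single-valued inverse and simultaneously pins down the support $0 < y \le A_0$ on which the density is nonzero, with $f_{h_{PE}}$ vanishing outside this interval.

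First I would invert the relation to solve for $r$ in terms of $y$. Taking logarithms of $y = \frac{2w_d^2}{w_z^2}\exp(-2r^2/w_z^2)$ isolates $r^2$ and yields $r = g(y) = \sqrt{-\frac{w_z^2}{2}\ln(\frac{w_z^2 y}{2w_d^2})}$, which is exactly \eqref{III-4}; note that on the support the argument of the inner logarithm lies in $(0,1]$, so the quantity under the outer square root is nonnegative and $g$ is real-valued. Next I would differentiate $g$ with respect to $y$, applying the chain rule through the square root and then the logarithm. The derivative of $\ln(w_z^2 y/2w_d^2)$ contributes the factor $1/y$, and collecting the remaining constants gives the expression \eqref{III-6}.

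Finally I would invoke the standard monotone change-of-variables formula, namely $f_{h_{PE}}(y) = f_R(g(y))\,\lvert dg(y)/dy\rvert$ on the support and zero otherwise, which is \eqref{III-5}. Substituting $g(y)$ from \eqref{III-4} and its derivative from \eqref{III-6}, and rewriting $-\ln(w_z^2 y/2w_d^2)$ as $\ln(2w_d^2/w_z^2 y)$ inside the surviving square root, assembles the claimed density \eqref{III-3}. The main obstacle here is not conceptual but is purely algebraic bookkeeping: carrying the chain-rule factors so that the $w_z$, the $1/y$, and the $\sqrt{2\ln(\cdot)}$ terms land in their stated positions, and ensuring the negative sign from the decreasing map is correctly absorbed into the absolute value so that the resulting density is positive.
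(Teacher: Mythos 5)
Your proposal is correct and follows essentially the same route as the paper's own proof: invert the monotone map $r \mapsto h_{PE}(r,z)$ from Proposition~1 to obtain $g(y)$, differentiate, and apply the standard change-of-variables formula $f_{h_{PE}}(y)=f_R(g(y))\left|\frac{dg(y)}{dy}\right|$ on the support $0<y\le A_0$. Your additional remarks on strict monotonicity and the well-definedness of the square root are sound and merely make explicit what the paper leaves implicit.
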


Without loss of generality, we reasonably assume that the horizontal and vertical misalignment, stemming from the satellite's jitter and tracking noise, follow independent and identical normal distributions \cite{barry1985beam,20099,arnon1997laser,4267802}.
Since the sum of two independent normal variables is also normally distributed, we use a single normal distribution to represent the vertical or horizontal misalignment for notation simplicity.
\begin{proposition}
    The radial deviation distance $r$ follows a Rayleigh distribution and its PDF is given as follows
    \begin{equation}
    \begin{aligned}
        {f_R}\left( r \right) = \frac{r}{{{\sigma _s}^2}}\exp \left( { - \frac{{{r^2}}}{{2{\sigma _s}^2}}} \right),0 \le r  , \label{III-7}
    \end{aligned}
    \end{equation}
    where ${\sigma _s}^2$ is the variance of pointing errors.
    \begin{proof}
        Denote ${r_x}$ and ${r_y}$ as horizontal and vertical deviation distances, respectively.
        Thus, $r = \sqrt {{r_x}^2 + {r_y}^2}$ follows a Rayleigh distribution.
    \end{proof}
\end{proposition}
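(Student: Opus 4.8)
The plan is to derive the distribution of $r = \sqrt{r_x^2 + r_y^2}$ directly from the joint law of the two orthogonal misalignment components and then recognize the result as a Rayleigh density. First I would invoke the modeling assumption stated just above the proposition: the horizontal and vertical deviations $r_x$ and $r_y$ are zero-mean Gaussian random variables, independent and with common variance $\sigma_s^2$, i.e. $r_x, r_y \sim \mathcal{N}(0, \sigma_s^2)$. By independence their joint density factorizes into the radially symmetric form
\begin{equation}
\begin{aligned}
f_{r_x, r_y}(x, y) = \frac{1}{2\pi \sigma_s^2} \exp\left( -\frac{x^2 + y^2}{2\sigma_s^2} \right).
\end{aligned}
\end{equation}

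Next I would pass to polar coordinates $x = r\cos\phi$, $y = r\sin\phi$ with $r \ge 0$ and $\phi \in [0, 2\pi)$, noting that this map is a bijection off the origin (a set of measure zero) with Jacobian $|\partial(x,y)/\partial(r,\phi)| = r$. Substituting $x^2 + y^2 = r^2$ gives the joint density of $(r, \phi)$ as $f_{R,\Phi}(r,\phi) = \frac{r}{2\pi\sigma_s^2}\exp\left( -\frac{r^2}{2\sigma_s^2} \right)$. Because this expression does not depend on $\phi$, the marginal density of $r$ follows by integrating over $\phi \in [0, 2\pi)$, which contributes a factor of $2\pi$ that cancels the angular normalizing constant and yields exactly (\ref{III-7}). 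As a sanity check I would confirm that $\int_0^\infty f_R(r)\, dr = 1$ via the substitution $u = r^2/(2\sigma_s^2)$.

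I do not expect a genuine obstacle, since this is the classical Gaussian-to-Rayleigh transformation and the entire content resides in the modeling hypotheses rather than in the calculation. The step that most deserves care is verifying that the two components are \emph{jointly} characterized by zero mean and a \emph{common} variance $\sigma_s^2$: unequal variances would produce a non-Rayleigh (elliptical) radial law in which $\phi$ no longer integrates out cleanly, and a nonzero mean — a persistent pointing bias — would produce a Rician rather than a Rayleigh distribution. Under the stated independent and identically distributed zero-mean assumption these complications vanish, and it is precisely the rotational symmetry of the joint Gaussian density that makes the angular marginalization trivial and delivers the Rayleigh PDF.
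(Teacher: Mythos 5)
Your proposal is correct and follows essentially the same route as the paper: the paper's proof is a one-line appeal to the standard fact that the norm of two i.i.d.\ zero-mean Gaussians is Rayleigh, and you simply supply the classical polar-coordinate derivation of that fact, together with an appropriate caution about the zero-mean and equal-variance hypotheses. No gap; your version is just a fully worked-out form of the paper's argument.
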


\begin{theorem}
    Given the beam waist $w_z$ at distance $z$, the variance of pointing errors ${\sigma _s}^2$, the detector radius $w_d$, the PDF of $h_{PE}$ is given by
    \begin{equation}
    \begin{aligned}
        {f_{{h_{PE}}}}\left( y \right) = \frac{{{w_z}^2}}{{4{\sigma _s}^2}}{\left( {\frac{{{w_z}^2}}{{2{w_d}^2}}} \right)^{\frac{{{w_z}^2}}{{4{\sigma _s}^2}}}}{y^{\frac{{{w_z}^2}}{{4{\sigma _s}^2}} - 1}},{\rm{ }} 0 < y \le {A_0}  . \label{III-8}
    \end{aligned}
    \end{equation}
\end{theorem}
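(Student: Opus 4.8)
The plan is to obtain the explicit density in (\ref{III-8}) by substituting the Rayleigh law of Proposition 2 into the change-of-variables expression already established in Theorem 1. Because Theorem 1 delivers $f_{h_{PE}}(y)$ in terms of $f_R(g(y))$ multiplied by the Jacobian factor from (\ref{III-3}), and Proposition 2 supplies the closed form (\ref{III-7}) of $f_R$, the argument reduces to an algebraic simplification with no additional probabilistic input. Throughout I would keep in mind that on the admissible range $0 < y \le A_0$ the quantity $\ln(w_z^2 y/(2 w_d^2))$ is nonpositive, so every logarithm that appears under a square root has the correct sign.

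First I would evaluate $f_R$ at $r = g(y) = \sqrt{-\frac{w_z^2}{2}\ln(w_z^2 y/(2 w_d^2))}$. The decisive observation is that $g(y)^2 = -\frac{w_z^2}{2}\ln(w_z^2 y/(2 w_d^2))$ enters the Rayleigh exponential as $\exp(-g(y)^2/(2\sigma_s^2))$, which through the identity $\exp(a \ln b) = b^{a}$ collapses into the power $\left(w_z^2 y/(2 w_d^2)\right)^{w_z^2/(4\sigma_s^2)}$. This is precisely the step that generates the polynomial-in-$y$ structure of the target density and converts the exponential of a logarithm into an elementary power law.

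Next I would multiply $f_R(g(y))$ by the Jacobian factor $\frac{w_z}{2 y \sqrt{2\ln(2 w_d^2/(w_z^2 y))}}$ from (\ref{III-3}). Using $\ln(2 w_d^2/(w_z^2 y)) = -\ln(w_z^2 y/(2 w_d^2))$, the square-root-logarithm in this Jacobian is, up to the constant $\sqrt{2}$, exactly the factor $\sqrt{-\ln(w_z^2 y/(2 w_d^2))}$ that the linear-in-$r$ prefactor of the Rayleigh density contributes through $g(y)$. These two square-root factors cancel, which is the crux of the simplification: it is the reason the final expression is a clean power of $y$ rather than something carrying residual logarithmic terms.

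Finally, after the cancellation I would collect the surviving constant $w_z^2/(4\sigma_s^2)$, split $\left(w_z^2 y/(2 w_d^2)\right)^{w_z^2/(4\sigma_s^2)}$ as $\left(w_z^2/(2 w_d^2)\right)^{w_z^2/(4\sigma_s^2)}\, y^{w_z^2/(4\sigma_s^2)}$, and absorb the $1/y$ from the Jacobian to lower the exponent of $y$ by one, arriving at (\ref{III-8}) on $0 < y \le A_0$. I expect the main obstacle to be purely bookkeeping: tracking the signs inside the logarithms and confirming that the two $\sqrt{\,\cdot\,}$ terms genuinely cancel instead of merely combining, since a single sign slip there would leave a spurious logarithmic factor and destroy the power-law form.
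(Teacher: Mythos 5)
Your proposal is correct and follows exactly the route the paper takes: substituting the Rayleigh density (\ref{III-7}) into the change-of-variables formula (\ref{III-3}) and simplifying, with the key cancellation of the two square-root-logarithm factors and the collapse of $\exp\bigl(-g(y)^2/(2\sigma_s^2)\bigr)$ into the power $\bigl(w_z^2 y/(2w_d^2)\bigr)^{w_z^2/(4\sigma_s^2)}$. The paper omits these algebraic details; your write-up supplies them correctly.
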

\begin{proof}
    Substituting (\ref{III-7}) into (\ref{III-3}), we obtain (\ref{III-8}), which completes the proof.
    The details are omitted to save space.
\end{proof}

In Theorem 2, we observe that the PDF of $h_{PE}$, i.e., ${f_{{h_{PE}}}}\left( y \right)$, is a power function of $y$.
It's worth noting that ${f_{{h_{PE}}}}\left( y \right)$ changes with the propagation distance $z$.

Denote $p_{th}$ as the sensitivity power threshold for a detector to identify an optical signal and $h_{th}$ as the corresponding average sensitivity threshold of $h_{PE}$, which is given by
\begin{equation}
\begin{aligned}
    {h_{th}} = \frac{{{p_{th}}}}{{{h_{PL}}\eta {P_T}}}  .
\end{aligned}
\end{equation}
Thus, the maximum radial deviation distance, i.e., $r_{max}$, is calculated by ${h_{PE}}\left( {r,z} \right) = {h_{th}}$, as follows
\begin{equation}
\begin{aligned}
    {r_{\max }} = \sqrt {\frac{{{w_z}^2}}{2}\ln \left( {\frac{{2{w_d}^2}}{{{h_{th}}{w_z}^2}}} \right)}  . \label{III-2}
\end{aligned}
\end{equation}

In some cases, the average channel state is essential to evaluating the average channel condition of OISL, as given in the following Theorem.
\begin{theorem}
    The average channel state ${\overline h _{PE}}$ caused by pointing error is given by
    \begin{equation}
    \begin{aligned}
        {\overline h _{PE}}\left( z \right) = & \frac{{{w_z}^2}}{{{w_z}^2 + 4{\sigma _s}^2}}{\left( {\frac{{{w_z}^2}}{{2{w_d}^2}}} \right)^{\frac{{{w_z}^2}}{{4{\sigma _s}^2}}}}\\
        & \cdot \left[ {{{\left( {\frac{{2{w_d}^2}}{{{w_z}^2}}} \right)}^{\frac{{{w_z}^2}}{{4{\sigma _s}^2}} + 1}} - {h_{th}}^{\frac{{{w_z}^2}}{{4{\sigma _s}^2}} + 1}} \right] , \label{III-9}
    \end{aligned}
    \end{equation}
    where ${{w_z}}$ is given in (\ref{II-3}).
\end{theorem}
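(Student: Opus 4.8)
The plan is to obtain $\overline{h}_{PE}(z)$ as the expectation of the channel state $h_{PE}$ taken over the range of values the detector can actually register, using the PDF established in Theorem~2. Because the detector cannot identify any optical signal whose channel state falls below the sensitivity threshold $h_{th}$, and because the largest attainable value is $A_0 = 2w_d^2/w_z^2$ (achieved at zero pointing error, $r=0$), the relevant support is the interval $[h_{th}, A_0]$. Accordingly, I would start from
\[
\overline{h}_{PE}(z) = \int_{h_{th}}^{A_0} y\, f_{h_{PE}}(y)\, dy,
\]
with $f_{h_{PE}}(y)$ given by (\ref{III-8}).

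The central computation is then elementary. Substituting (\ref{III-8}), the integrand collapses to a single monomial in $y$,
\[
y\, f_{h_{PE}}(y) = \frac{w_z^2}{4\sigma_s^2}\left(\frac{w_z^2}{2w_d^2}\right)^{w_z^2/(4\sigma_s^2)} y^{\,w_z^2/(4\sigma_s^2)} ,
\]
so that the power rule applies directly. Integrating over $[h_{th}, A_0]$ produces a coefficient $\frac{w_z^2/(4\sigma_s^2)}{w_z^2/(4\sigma_s^2)+1}$ multiplying the difference of the antiderivative at the two endpoints, namely $A_0^{\,w_z^2/(4\sigma_s^2)+1} - h_{th}^{\,w_z^2/(4\sigma_s^2)+1}$.

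To match the stated form (\ref{III-9}), I would then perform three cosmetic simplifications: rewrite the leading coefficient as $\frac{w_z^2/(4\sigma_s^2)}{w_z^2/(4\sigma_s^2)+1} = \frac{w_z^2}{w_z^2+4\sigma_s^2}$; substitute $A_0 = 2w_d^2/w_z^2$ so that the upper-endpoint term becomes $(2w_d^2/w_z^2)^{w_z^2/(4\sigma_s^2)+1}$; and factor the constant prefactor $(w_z^2/(2w_d^2))^{w_z^2/(4\sigma_s^2)}$ out front. Collecting these pieces reproduces (\ref{III-9}) exactly.

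I do not expect a genuine technical obstacle, since once the integration limits are fixed the calculation is a one-line integral of a power function. The only point deserving explicit justification — and the step I would emphasize — is the choice of lower limit $h_{th}$ in place of $0$: the average counts only detectable realizations, so pointing errors large enough to drive $h_{PE}$ below $h_{th}$ (equivalently $r > r_{\max}$, with $r_{\max}$ from (\ref{III-2})) contribute nothing. A related bookkeeping caveat is that this is a partial, non-renormalized expectation, so no extra factor $1/\Pr(h_{PE} \ge h_{th})$ appears and the expression as written is consistent with that convention.
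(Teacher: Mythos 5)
Your proposal is correct and follows essentially the same route as the paper: the paper also reduces $\overline{h}_{PE}(z)$ to the non-renormalized partial expectation $\int_{h_{th}}^{A_0} y\,f_{h_{PE}}(y)\,dy$ (writing it first as $\mathbb{P}\{r\le r_{\max}\}$ times the conditional expectation, which collapses to the same integral) and then evaluates it with (\ref{III-8}). You have simply carried out explicitly the power-rule integration and the simplification $\frac{w_z^2/(4\sigma_s^2)}{w_z^2/(4\sigma_s^2)+1}=\frac{w_z^2}{w_z^2+4\sigma_s^2}$ that the paper omits ``to save space,'' and your result matches (\ref{III-9}).
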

\begin{proof}
    Based on Proposition 1, Proposition 2 and (\ref{III-2}), ${\overline h _{PE}}$ is obtained as follows
    \begin{equation}
    \begin{aligned}
    {{\bar h}_{PE}}\left( z \right) = & \mathbb{P} \left\{ {r \le {r_{\max }}} \right\}\int_{{h_{th}}}^{{A_0}} {y \cdot {h_{PE}}\left( {\left. y \right|y \ge {h_{th}}} \right)} dy\\
    = & \int_{{h_{th}}}^{{A_0}} {y \cdot {h_{PE}}\left( y \right)} dy , \label{III-10}
    \end{aligned}
    \end{equation}
    \color{black}
    where $\mathbb{P} \left\{ {r \le {r_{\max }}} \right\}$ is the probability that the radial deviation distance $r$ is smaller than the maximum radial deviation distance $r_{max}$, shown as follows
    \begin{equation}
    \begin{aligned}
        \mathbb{P} \left\{ {r \le {r_{\max }}} \right\} = \int_0^{{r_{\max }}} {{f_R}\left( r \right)} dr = 1 - \exp \left( { - \frac{{{r_{\max }}^2}}{{2{\sigma _s}^2}}} \right)  , \label{III-11}
    \end{aligned}
    \end{equation}
    ${h_{PE}}\left( {y} \right)$ is given in (\ref{III-8}).
    By calculating (\ref{III-10}), we obtain (\ref{III-9}).
    The detailed derivation is omitted here to save space.
\end{proof}

It is worth noting that in Theorem 3, ${\overline h _{PE}}$ can be approximated by ${{\bar h}_{PE}} \approx \frac{{2{w_d}^2}}{{{w_z}^2 + 4{\sigma _s}^2}}$ when ${{h_{th}}}$ is very small.

\begin{theorem}
    Given the beam waist $w_z$ at distance $z$, the variance of pointing errors ${\sigma _s}^2$, the detector radius $w_d$, the CDF of $h_{PE}$ is given by
    \begin{equation}
    \begin{aligned}
        {F_{{h_{PE}}}}\left( y \right) = {\left( {\frac{{{w_z}^2}}{{2{w_d}^2}}} \right)^{\frac{{{w_z}^2}}{{4{\sigma _s}^2}}}}{y^{\frac{{{w_z}^2}}{{4{\sigma _s}^2}}}},{\rm{ }} 0 < y \le {A_0}  \label{III-12}
    \end{aligned}
    \end{equation}
\end{theorem}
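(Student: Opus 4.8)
The plan is to obtain the CDF by directly integrating the PDF established in Theorem 2, which is the most economical route since the required density is already in hand. Writing $k := w_z^2/(4\sigma_s^2)$ and recalling $A_0 = 2w_d^2/w_z^2$, the density in (\ref{III-8}) takes the monomial form $f_{h_{PE}}(y) = k\,A_0^{-k}\,y^{k-1}$ on the support $(0,A_0]$, where I have used $(w_z^2/(2w_d^2))^k = A_0^{-k}$. The CDF is then $F_{h_{PE}}(y) = \int_0^y f_{h_{PE}}(t)\,dt$, and since the integrand is a pure power of $t$, its antiderivative is $A_0^{-k}t^k$; evaluating at the limits and noting that the contribution at $t=0$ vanishes because $k>0$ yields $F_{h_{PE}}(y) = A_0^{-k}y^k$, which is exactly (\ref{III-12}) after resubstituting $A_0^{-k} = (w_z^2/(2w_d^2))^k$.

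As an independent check --- and an alternative derivation that does not even invoke Theorem 2 --- I would exploit the fact that $h_{PE}(r,z)$ is a strictly decreasing function of $r$ by Proposition 1. Consequently the event $\{h_{PE}\le y\}$ is equivalent to $\{r\ge g(y)\}$, where $g(y)$ is the inverse map given in (\ref{III-4}). Hence $F_{h_{PE}}(y) = \mathbb{P}\{r\ge g(y)\} = 1 - F_R(g(y))$, and substituting the Rayleigh CDF $F_R(r) = 1 - \exp(-r^2/(2\sigma_s^2))$ implied by Proposition 2 gives $F_{h_{PE}}(y) = \exp(-g(y)^2/(2\sigma_s^2))$. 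Since $g(y)^2 = -\tfrac{w_z^2}{2}\ln(w_z^2 y/(2w_d^2))$, the exponent collapses to $\tfrac{w_z^2}{4\sigma_s^2}\ln(w_z^2 y/(2w_d^2))$, and exponentiating reproduces (\ref{III-12}).

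The calculation itself is routine; the only point demanding care is the orientation of the transformation. Because the channel state decreases as the pointing error grows, one must write the CDF event as $\{r\ge g(y)\}$ rather than $\{r\le g(y)\}$ --- reversing this would produce $F_R(g(y))$ and hence the complementary, and incorrect, expression. A useful consistency test at the end is to verify the boundary values: the formula gives $F_{h_{PE}}(A_0) = (w_z^2/(2w_d^2))^k(2w_d^2/w_z^2)^k = 1$ and $F_{h_{PE}}(0^+) = 0$, confirming that (\ref{III-12}) is a genuine CDF whose support is precisely $(0,A_0]$, consistent with the density in Theorem 2.
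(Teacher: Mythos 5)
Your primary derivation---integrating the Theorem 2 density---is exactly the paper's route, and your choice of lower limit $0$ actually matches the stated formula more cleanly than the paper's own proof, which writes the integral from $h_{th}$ and would strictly yield $\left( \frac{w_z^2}{2w_d^2} \right)^{k}\left( y^{k} - h_{th}^{k} \right)$ with $k = \frac{w_z^2}{4\sigma_s^2}$ unless $h_{th}$ is taken to be negligible. Your second, transformation-based check via $\{h_{PE}\le y\} = \{r\ge g(y)\}$ and the Rayleigh CDF is correct and not in the paper, but the core argument is essentially the same.
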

\begin{proof}
    The CDF of $h_{PE}$ is calculated by ${F_{{h_{PE}}}}\left( y \right) = \int_{{h_{th}}}^y {{f_{{h_{PE}}}}\left( {y'} \right)} dy'$ where ${{f_{{h_{PE}}}}\left( {y'} \right)}$ is given in (\ref{III-8}).
\end{proof}

\section{Cooperative OISL System Design}
\label{optimization}
In this section, we design a cooperative OISL communication system where relaying satellites are strategically positioned to balance the trade-off between the increased data rate of each link and the overall transmission latency.
As illustrated in Fig. 2, the channel state of the single hop between the source and destination satellites can be unfavorable due to the broader beam width and significant pointing errors that arise from the long propagation distance. 
Introducing additional relaying satellites enhances the data rate of each hop; however, the increased number of transmissions through these relays may lead to an increase in total transmission latency. 
To address this, we aim to minimize the number of cooperating satellites, denoted as $N$, while ensuring that the total latency remains below a specified threshold, $T_{th}$. This is done with the given parameters of a data size $D$ and a communication distance $L$ between the source and destination satellites.
\begin{equation}
\begin{aligned}
    & \min {\rm{ }} \text{ } N\\
    & {\rm{s}}{\rm{.t}}{\rm{. }} \text{ } \sum\limits_{n = 1}^N {\frac{D}{{{{\cal R}_n}}}}  \le {T_{th}}  , \label{IV-1}
\end{aligned}
\end{equation}
where ${\cal R}_n$ is the average data rate of the n-th OISL.

\captionsetup{font={scriptsize}}
\begin{figure}[t]
\begin{center}
\setlength{\abovecaptionskip}{+0.2cm}
\setlength{\belowcaptionskip}{-0.5cm}
\centering
  \includegraphics[width=2.8in, height=2.8in]{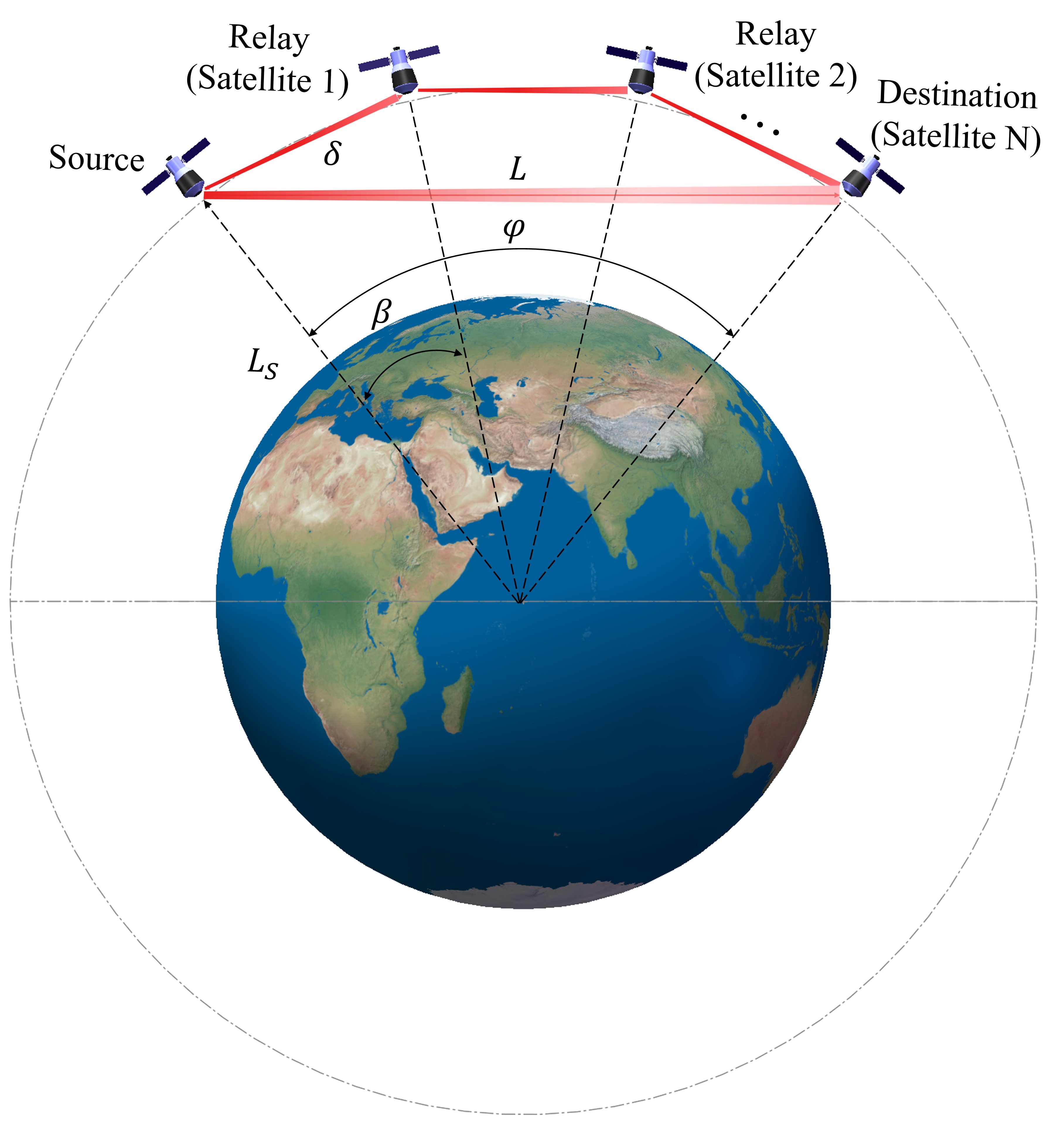}
\renewcommand\figurename{Fig}
\caption{\scriptsize An illustration of the cooperative OISLs communication system.}
\label{fig2}
\end{center}
\end{figure}

\begin{theorem}
    Given bandwidth $B$, transmit power $P_T$, communication distance $L$, the number of satellites $N$, the average achievable data rate ${\cal R}_n$ of the n-th OISL is given by
    \begin{equation}
    \begin{aligned}
        {{\cal R}_n} = \frac{{B\left( {\Upsilon \left( {{A_{0,\delta }}} \right) - \Upsilon \left( {{h_{th}}} \right)} \right)}}{{{A_{0,\delta }}^{\xi {\delta ^2}}\ln 2}}  ,  \label{IV-2}
    \end{aligned}
    \end{equation}
    where
    \begin{equation}
    \begin{aligned}
        \Upsilon \left( x \right) = & {x^{\xi {\delta ^2}}}\ln \left( {1 + {\rm{SNR}} \cdot x} \right) - \frac{{{x^{\xi {\delta ^2} + 1}}}}{{\xi {\delta ^2} + 1}}{\rm{SNR}}\\
        & \cdot {}_2{F_1}\left( {1,\xi {\delta ^2} + 1;\xi {\delta ^2} + 2; - {\rm{SNR}} \cdot x} \right) , \label{IV-3}
    \end{aligned}
    \end{equation}
    and ${A_{0,\delta }} = \frac{{{w_d}^2}}{{2{\sigma _s}^2\xi {\delta ^2}}}$ is the fraction of the collected power at beam center with $\delta$ propagation distance, $\delta  = 2{L_S}\sin \left( {\frac{1}{N}\arcsin \left( {\frac{L}{{2{L_S}}}} \right)} \right)$, $L_S$ is the distance from satellite's orbit to the center of the earth, $\xi  = \frac{{{{\tan }^2}\theta }}{{4{\sigma _s}^2}}$, ${\rm{SNR}} = \frac{{{h_{PL}}{\eta}{P_T}}}{{{\sigma _n}^2}}$, ${}_2{F_1}\left( { \cdot } \right)$ is the hypergeometric function.
\end{theorem}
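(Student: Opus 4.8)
The plan is to define the average rate as the expectation of the instantaneous Shannon rate $B\log_2(1+\mathrm{SNR}\cdot h_{PE})$ over the channel statistics of Theorem 2, and then reduce the resulting integral to the closed form in (\ref{IV-2}). First I would pin down the geometry behind $\delta$: placing $N$ satellites evenly on the orbit of radius $L_S$, the chord $L$ between source and destination subtends a central half-angle $\arcsin\!\left(\frac{L}{2L_S}\right)$, so each of the $N$ equal hops subtends $\frac{1}{N}\arcsin\!\left(\frac{L}{2L_S}\right)$ and has chord length $\delta=2L_S\sin\!\left(\frac{1}{N}\arcsin\!\left(\frac{L}{2L_S}\right)\right)$. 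Setting the propagation distance $z=\delta$ in (\ref{II-3}) gives $w_z=\delta\tan\theta$, hence $\frac{w_z^2}{4\sigma_s^2}=\xi\delta^2$ and $A_0=\frac{2w_d^2}{w_z^2}=A_{0,\delta}$, so the PDF in (\ref{III-8}) collapses to the power law $f_{h_{PE}}(y)=\xi\delta^2\,A_{0,\delta}^{-\xi\delta^2}\,y^{\xi\delta^2-1}$ on $0<y\le A_{0,\delta}$.

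Next I would write the average rate over the admissible range $[h_{th},A_{0,\delta}]$, following the same conditioning argument as in the proof of Theorem 3 (where the factor $\mathbb{P}\{r\le r_{\max}\}$ cancels), as
\[
\mathcal{R}_n=\int_{h_{th}}^{A_{0,\delta}} B\log_2(1+\mathrm{SNR}\cdot y)\,f_{h_{PE}}(y)\,dy
=\frac{B\,\xi\delta^2}{A_{0,\delta}^{\xi\delta^2}\ln 2}\int_{h_{th}}^{A_{0,\delta}} y^{\xi\delta^2-1}\ln(1+\mathrm{SNR}\cdot y)\,dy .
\]
The core task is thus the integral $\int y^{\xi\delta^2-1}\ln(1+\mathrm{SNR}\cdot y)\,dy$, which I would attack by integration by parts with $u=\ln(1+\mathrm{SNR}\cdot y)$ and $dv=y^{\xi\delta^2-1}\,dy$. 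This produces the logarithmic boundary term $\frac{1}{\xi\delta^2}y^{\xi\delta^2}\ln(1+\mathrm{SNR}\cdot y)$ together with a residual rational-times-power integral $-\frac{\mathrm{SNR}}{\xi\delta^2}\int \frac{y^{\xi\delta^2}}{1+\mathrm{SNR}\cdot y}\,dy$.

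The main obstacle is this residual integral. I would evaluate it through the Euler integral representation of the Gauss hypergeometric function, giving $\int_0^x \frac{y^{a}}{1+sy}\,dy=\frac{x^{a+1}}{a+1}\,{}_2F_1(1,a+1;a+2;-sx)$ with $a=\xi\delta^2$ and $s=\mathrm{SNR}$; identifying the three parameters and the argument $-\mathrm{SNR}\cdot x$ correctly is the delicate point. Substituting this back reconstructs exactly the antiderivative $\Upsilon$ of (\ref{IV-3}), so that $\xi\delta^2\int_{h_{th}}^{A_{0,\delta}} y^{\xi\delta^2-1}\ln(1+\mathrm{SNR}\cdot y)\,dy=\Upsilon(A_{0,\delta})-\Upsilon(h_{th})$. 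As a clean independent check I would differentiate $\Upsilon$ directly: since $\frac{d}{dx}\bigl[\frac{s\,x^{a+1}}{a+1}{}_2F_1(1,a+1;a+2;-sx)\bigr]=\frac{s\,x^{a}}{1+sx}$, the two non-logarithmic contributions cancel and leave $\Upsilon'(x)=\xi\delta^2\,x^{\xi\delta^2-1}\ln(1+\mathrm{SNR}\cdot x)$, confirming that $\Upsilon$ is the desired antiderivative. Multiplying through by the prefactor $\frac{B}{A_{0,\delta}^{\xi\delta^2}\ln 2}$---with the $\xi\delta^2$ from integration by parts cancelling the $1/(\xi\delta^2)$---then yields (\ref{IV-2}).
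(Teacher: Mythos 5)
Your proposal is correct and follows essentially the same route as the paper: same hop-geometry derivation of $\delta$, same reduction of the expectation to $B\int_{h_{th}}^{A_{0,\delta}}\log_2(1+\mathrm{SNR}\cdot y)f_{h_{PE}}(y)\,dy$ with the power-law PDF of Theorem 2 evaluated at $w_z=\delta\tan\theta$, and the same integration by parts followed by the identity $\int_0^u x^{\mu-1}(1+\beta x)^{-\nu}dx=\frac{u^\mu}{\mu}{}_2F_1(\nu,\mu;1+\mu;-\beta u)$ to produce $\Upsilon$. The only cosmetic differences are that the paper splits the integral as $\Omega_1-\Omega_2$ with both pieces anchored at $0$, whereas you integrate directly over $[h_{th},A_{0,\delta}]$ and add a differentiation check of $\Upsilon$; both are sound.
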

\begin{proof}
    As shown in Fig. \ref{fig2}, we have $\sin \left( {\frac{\varphi }{2}} \right) = \frac{L}{{2{L_S}}}$, $\varphi  = 2\arcsin \left( {\frac{L}{{2{L_S}}}} \right)$, and $\beta  = \frac{\varphi }{N} = \frac{2}{N}\arcsin \left( {\frac{L}{{2{L_S}}}} \right)$.
    Denote the distance of each hop by $\delta$, and we have
    \begin{equation}
    \begin{aligned}
        \frac{\delta }{2} & = {L_S}\sin \left( {\frac{\beta }{2}} \right) \\
        & = {L_S}\sin \left( {\frac{1}{N}\arcsin \left( {\frac{L}{{2{L_S}}}} \right)} \right). \label{IV-Theo-1}
    \end{aligned}
    \end{equation}
    Therefore, according to (\ref{IV-Theo-1}), we have
    \begin{equation}
    \begin{aligned}
        \delta  = 2{L_S}\sin \left( {\frac{1}{N}\arcsin \left( {\frac{L}{{2{L_S}}}} \right)} \right). \label{IV-Theo-2}
    \end{aligned}
    \end{equation}
    
    The average achievable data rate of the n-th OISL is 
    \begin{equation}
    \begin{aligned}
        & {{\cal R}_n} = \mathbb{E} \left\{ {B{{\log }_2}\left( {1 + {\rm{SNR}} \cdot {h_{PE}}} \right)} \right\}\\
        & = \mathbb{P} \left\{ {{r_\delta } \le {r_{\delta ,\max }}} \right\} \mathbb{E} \left\{ {\left. {B{{\log }_2}\left( {1 + {\rm{SNR}}{h_{PE}}} \right)} \right|{h_{PE}} \ge {h_{th}}} \right\}\\
        & + \mathbb{P} \left\{ {{r_\delta } > {r_{\delta ,\max }}} \right\} \mathbb{E} \left\{ {\left. {B{{\log }_2}\left( 1 \right)} \right|{h_{PE}} < {h_{th}}} \right\}, \label{IV-Theo-3}
    \end{aligned}
    \end{equation}
    In addition, we have 
    \begin{equation}
    \begin{aligned}
        & \mathbb{P} \left\{ {{r_\delta } \le {r_{\delta ,\max }}} \right\} \mathbb{E} \left\{ {\left. {B{{\log }_2}\left( {1 +  {\rm{SNR}}} \cdot {h_{PE}} \right)} \right|{h_{PE}} \ge {h_{th}}} \right\}\\
        & = B\int_{{h_{th}}}^{{A_{0,\delta }}} {{{\log }_2}\left( {1 + {\rm{SNR}} \cdot {h_{PE}}} \right){f_{{h_{PE}}}}\left( y \right)} dy \\
        & = B\int_{{h_{th}}}^{{A_{0,\delta }}} {{{\log }_2}\left( {1 + \frac{{{h_{PL}}{R_d}{P_T}}}{{{\sigma _n}^2}}y} \right)} \\
        & \text{ }\text{ }\text{ } \cdot \xi {\delta ^2}{\left( {\frac{{2{\sigma _s}^2\xi {\delta ^2}}}{{{w_d}^2}}} \right)^{\xi {\delta ^2}}}{y^{\xi {\delta ^2} - 1}}dy\\
        & = \frac{{B\xi {\delta ^2}}}{{\ln 2}}{\left( {\frac{{2{\sigma _s}^2\xi {\delta ^2}}}{{{w_d}^2}}} \right)^{\xi {\delta ^2}}}\underbrace {\int_{{h_{th}}}^{{A_{0,\delta }}} {\frac{{\ln \left( {1 + {\rm{SNR}} \cdot y} \right)}}{{{y^{1 - \xi {\delta ^2}}}}}} dy}_\Omega . \label{IV-Theo-5}
    \end{aligned}
    \end{equation}
    Let us denote $\Omega$ as the integral in the last step of (\ref{IV-Theo-5}) as follows
    \begin{equation}
    \begin{aligned}
        & \Omega = \int_{{h_{th}}}^{{A_{0,\delta }}} {\frac{{\ln \left( {1 + {\rm{SNR}} \cdot y} \right)}}{{{y^{1 - \xi {\delta ^2}}}}}} dy. \label{IV-Theo-6}
    \end{aligned}
    \end{equation}
    Note that $\Omega$ can be further derived as follows
    \begin{equation}
    \begin{aligned}
        \Omega  & = {\Omega _1} - {\Omega _2}, \label{IV-Theo-7}
    \end{aligned}
    \end{equation}
    where
    \begin{equation}
    \begin{aligned}
        {\Omega _1}  = \int_0^{{A_0}} {\ln \left( {1 + {\rm{SNR}} \cdot y} \right){y^{\xi {\delta ^2} - 1}}} dy , \label{IV-Theo-8}
    \end{aligned}
    \end{equation}
    and 
    \begin{equation}
    \begin{aligned}
        {\Omega _2}  = \int_0^{{h_{th}}} {\ln \left( {1 + {\rm{SNR}} \cdot y} \right){y^{\xi {\delta ^2} - 1}}} dy . \label{IV-Theo-9}
    \end{aligned}
    \end{equation}
    Specifically, ${\Omega _1}$ is given in (\ref{IV-Theo-10}) at the top of the page.
    \begin{figure*}[ht] 
    \centering 
    \begin{align}
        {\Omega _1} & = \int_0^{{A_0}} {\ln \left( {1 + {\rm{SNR}} \cdot y} \right){y^{\xi {\delta ^2} - 1}}} dy = \frac{1}{{\xi {\delta ^2}}}\int_0^{{A_0}} {\ln \left( {1 + {\rm{SNR}} \cdot y} \right)} d\left( {{y^{\xi {\delta ^2}}}} \right)  \nonumber \\
        & = \frac{1}{{\xi {\delta ^2}}}\left. {\ln \left( {1 + {\rm{SNR}} \cdot y} \right){y^{\xi {\delta ^2}}}} \right|_0^{{A_0}} - \frac{1}{{\xi {\delta ^2}}}\int_0^{{A_0}} {{y^{\xi {\delta ^2}}}} d\left( {\ln \left( {1 + {\rm{SNR}} \cdot y} \right)} \right)  \nonumber \\
        & = \frac{1}{{\xi {\delta ^2}}}\ln \left( {1 + {\rm{SNR}} \cdot {A_0}} \right){A_0}^{\xi {\delta ^2}} - \frac{1}{{\xi {\delta ^2}}}\int_0^{{A_0}} {{y^{\xi {\delta ^2}}}\frac{1}{{\frac{1}{{{\rm{SNR}}}} + y}}} dy  \nonumber \\
        & = \frac{1}{{\xi {\delta ^2}}}\ln \left( {1 + {\rm{SNR}} \cdot {A_0}} \right){A_0}^{\xi {\delta ^2}} - \frac{{{\rm{SNR}} \cdot }}{{\xi {\delta ^2}}}\frac{{{A_0}^{\xi {\delta ^2} + 1}}}{{\xi {\delta ^2} + 1}}{}_2{F_1}\left( {1,\xi {\delta ^2} + 1;\xi {\delta ^2} + 2; - {\rm{SNR}} \cdot {A_0}} \right)  \nonumber \\
        & = \frac{{{A_0}^{\xi {\delta ^2}}}}{{\xi {\delta ^2}}}\left( {\ln \left( {1 + {\rm{SNR}} \cdot {A_0}} \right)} \right. - \frac{{{\rm{SNR}} \cdot {A_0}}}{{\xi {\delta ^2} + 1}}\left. {{}_2{F_1}\left( {1,\xi {\delta ^2} + 1;\xi {\delta ^2} + 2; - {\rm{SNR}} \cdot {A_0}} \right)} \right) . \label{IV-Theo-10}
    \end{align}
    \hrulefill
    \vspace{-2mm}
    \end{figure*}
    Considering $\int_0^u {{x^{\mu  - 1}}\frac{1}{{{{\left( {1 + \beta x} \right)}^\nu }}}} dx = \frac{{{u^\mu }}}{\mu }{}_2{F_1}\left( {\nu ,\mu ;1 + \mu ; - \beta u} \right)$ based on \cite{zwillinger2014table}.
    Similarly, for ${\Omega _2}$, we have
    \begin{equation}
    \begin{aligned}
        {\Omega _2} & = \frac{{{h_{th}}^{\xi {\delta ^2}}}}{{\xi {\delta ^2}}}\left( {\ln \left( {1 + {\rm{SNR}} \cdot {h_{th}}} \right)} \right.\\
        & - \frac{{{\rm{SNR}} \cdot {h_{th}}}}{{\xi {\delta ^2} + 1}}\left. {{}_2{F_1}\left( {1,\xi {\delta ^2} + 1;\xi {\delta ^2} + 2; - {\rm{SNR}} \cdot {h_{th}}} \right)} \right) .
    \end{aligned}
    \end{equation}
    We obtain the desired results with mathematical manipulations and complete the proof.
\end{proof}

In practice, the variance of pointing errors ${\sigma _s}^2$ increases with the propagation distance due to the increased difficulty of tracking.
Therefore, ${\sigma _s}$ is a function of propagation distance.
For example, we model ${\sigma _s} = {\sigma _{s,0}}{e^{{k_0}\frac{\delta }{{{d_0}}}}}$ and ${\sigma _{s,0}} = 2$, $k_0 = 0.1$, and $d_0 = 100$ km as the reference distance.

The optimal $N$ is obtained by solving the equation $ND = {T_{th}}{{\cal R}_n}$, where we consider equally spaced relay satellites.

\vspace{-2mm}
\section{Simulations and Discussions}
\label{simulations}
In this section, we conduct simulations to evaluate the OISLs' communication performance.
Unless specified otherwise, the default parameters are shown in Table I.

\begin{table}[t]
\setlength{\abovecaptionskip}{0cm}
\captionsetup{font={normalsize}}
\caption{Default Parameters Setup}
\label{Table2}
\begin{center}
\small
 \begin{tabular}{p{3.5cm} c c}
 \hline
 \hline
 \textbf{Description} & \textbf{Parameter} &  \textbf{Value} \\ 
 \hline
 \hline
 Beam waist at transmitter & $w_0$ &  0.1 \text{m} \cite{20099} \\
 Radius of the detector & $w_d$ &  0.1 \text{m} \cite{20099} \\
 Laser frequency & $f$ &  200 \text{THz} \\
 Deterministic path loss &  $h_{PL}$ & 0.9 \\
 Detector responsivity & $\eta$  &  0.5 \cite{4267802} \\
 Transmit power & $P_T$ & 0.5 Watt \\
 Bandwidth  & $B$ & 10 GHz \\
 Variance of additive noise & ${\sigma _n}^2$  & $1 \times {10^{ - 12}}$ \\
 Satellite distance to center & $L_S$& 6900 km \\
 Data size & $D$  & 100 Gbits \\
 Sensitivity threshold &  $h_{th}$ & $1 \times {10^{ - 6}}$ Watt \cite{4267802} \\
 \hline
 \hline
\end{tabular}
\vspace{-2.0em}
\end{center}
\end{table}

In Fig. 3, we compare the average channel state caused by pointing errors, i.e., ${\overline h _{PE}}$, as derived in Theorem 3, versus OISL laser frequency, i.e., $f$, under various values of $\sigma_s^2$ and different propagation distances $z$.
It is observed that ${\overline h _{PE}}$ gradually increases as the laser frequency $f$ increases.
This is because a higher frequency allows the laser beam to concentrate more energy on the detector.
However, when $\sigma_s^2$ is large, the increase in $f$ does not significantly enhance ${\overline h _{PE}}$.
This limitation is due to the misalignment caused by pointing errors, which leads the detector to deviate from the concentrated beam.
Additionally, we observe that ${\overline h _{PE}}$ decreases with the increment of $\sigma_s^2$ and $z$, which aligns with our expectations.
Moreover, the results from Monte Carlo simulations validate the analytical findings.

\captionsetup{font={scriptsize}}
\begin{figure}[t]
\begin{center}
\setlength{\abovecaptionskip}{-0.2cm}
\setlength{\belowcaptionskip}{-0.7cm}
\centering
  \includegraphics[width=3.2in, height=2.8in]{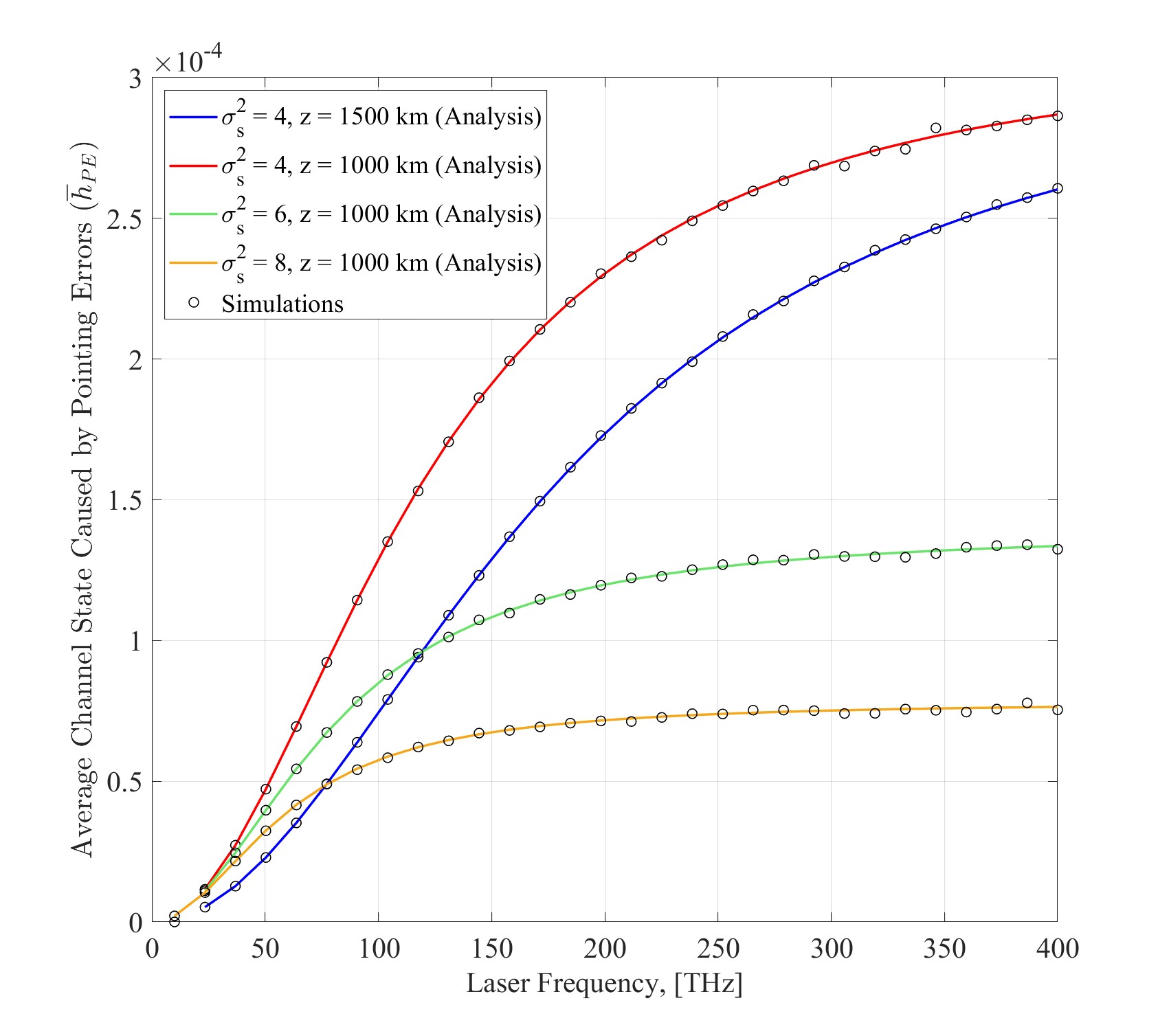}
\renewcommand\figurename{Fig.}
\caption{\scriptsize Average channel state caused by pointing errors versus laser frequency.}
\label{Fig 4}
\end{center}
\end{figure}

\captionsetup{font={scriptsize}}
\begin{figure}[t]
\begin{center}
\setlength{\abovecaptionskip}{-0.2cm}
\setlength{\belowcaptionskip}{-0.7cm}
\centering
  \includegraphics[width=3.2in, height=2.8in]{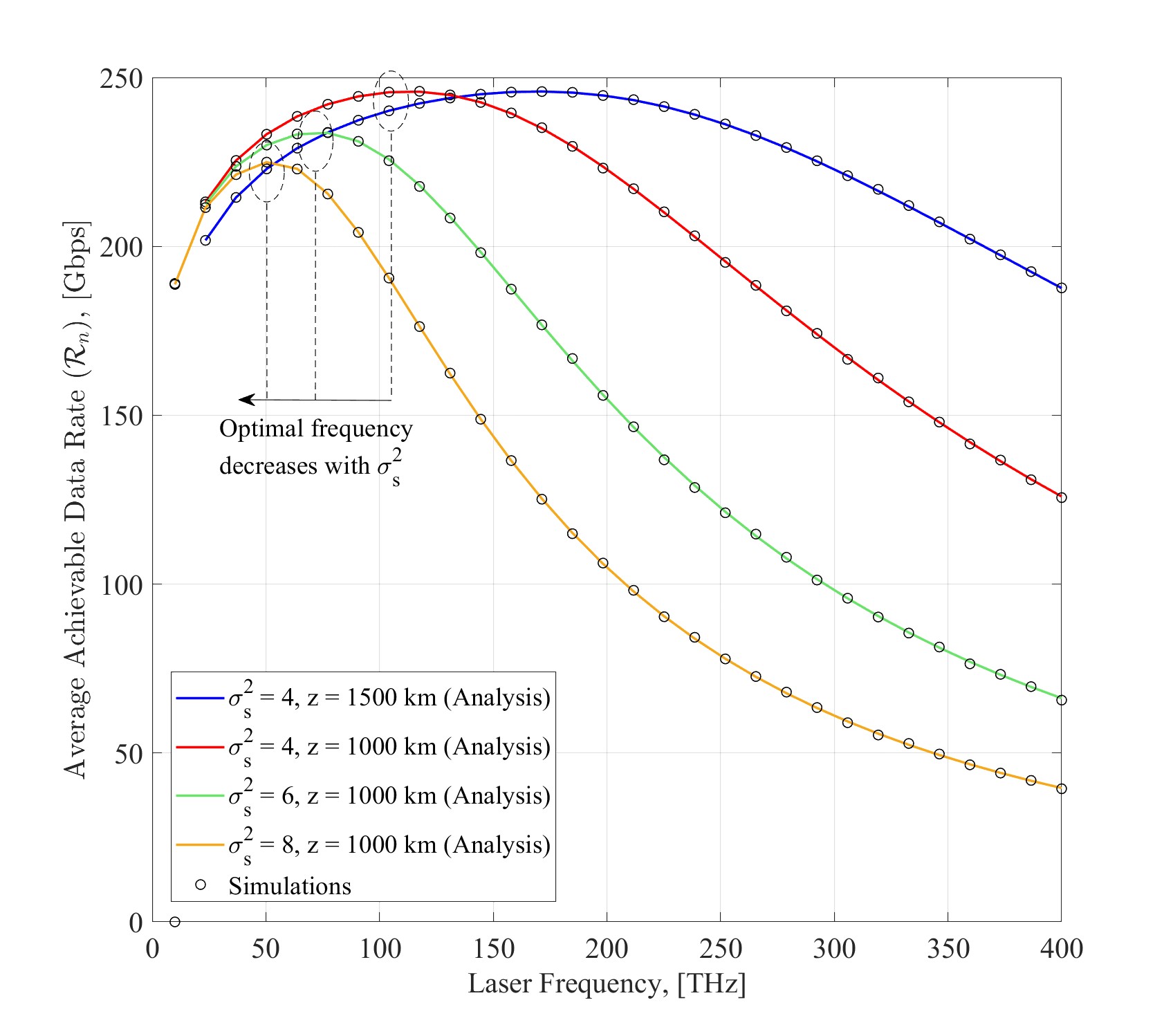}
\renewcommand\figurename{Fig.}
\caption{\scriptsize Average achievable data rate versus laser frequency.}
\label{Fig 4}
\end{center}
\end{figure}

In Fig. 4, we examine the average achievable data rate, i.e., ${{\cal R}_n}$, derived in Theorem 5 versus OISL laser frequency, i.e., $f$, under various $\sigma_s^2$ and propagation distance $z$.
We observe that there is an optimal laser frequency that maximizes ${{\cal R}_n}$.
On the one hand, when $f$ is too small, the laser's energy is diffused, and the power intensity received at the detector decreases.
On the other hand, when $f$ is too large, the detector may deviate much from the beam center due to pointing errors. In turn, the amount of energy received at the detector also decreases.
Therefore, there is a trade-off between concentrating the beam energy and balancing the misalignment.

In Fig. 5, the total latency of the cooperative OISLs system is presented concerning the number of cooperative satellites with/without frequency optimization.
The optimal laser frequency solution is calculated by exhaustive searching.
Optimizing the number of cooperative satellites and $f$ can significantly reduce the total latency.
The reduction in total latency results from a trade-off involving balancing the increased data rate of each cooperative link with the sum of latency over all cooperative links.

\captionsetup{font={scriptsize}}
\begin{figure}[t]
\begin{center}
\setlength{\abovecaptionskip}{-0.2cm}
\setlength{\belowcaptionskip}{-0.8cm}
\centering
  \includegraphics[width=3.2in, height=2.8in]{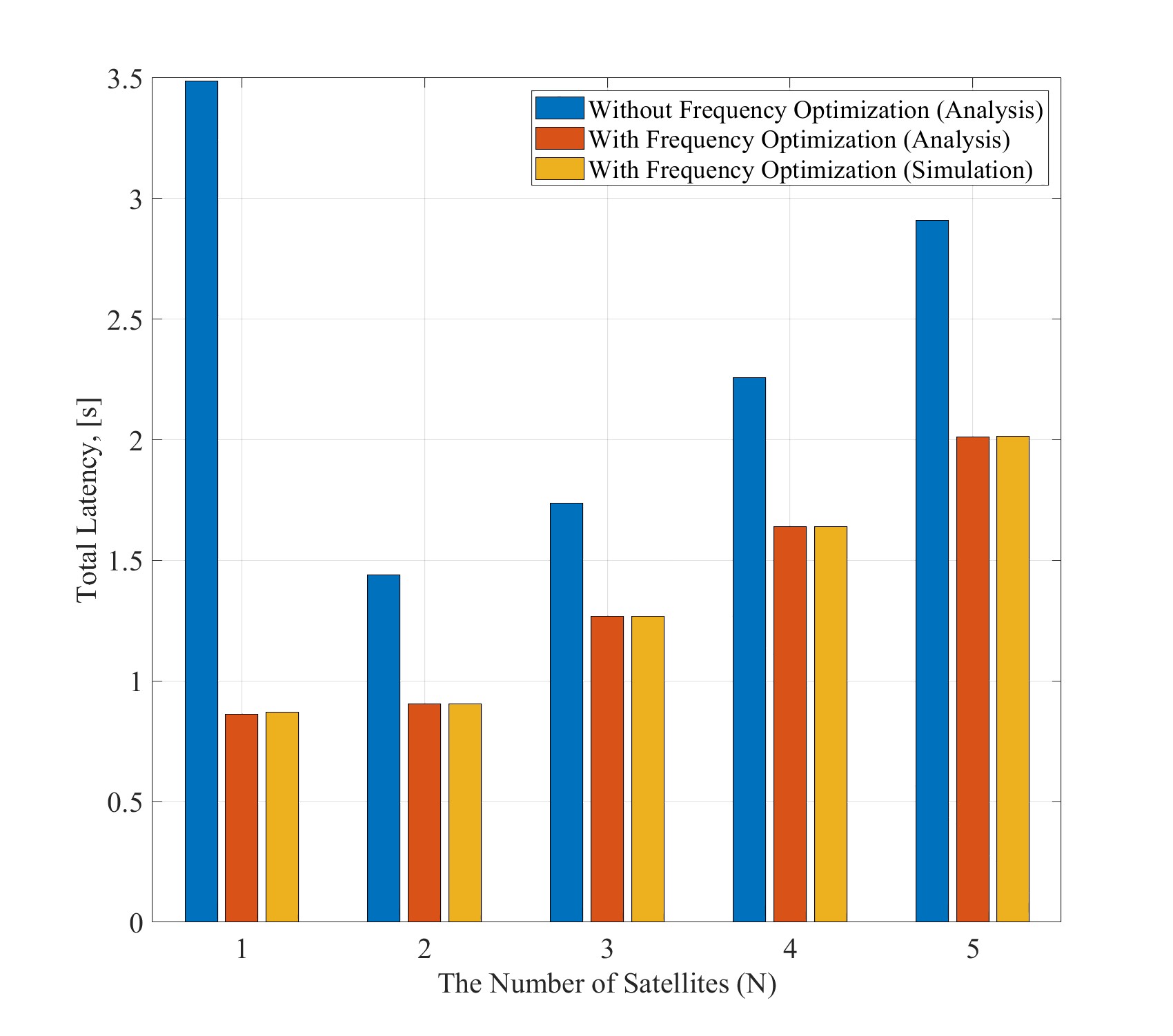}
\renewcommand\figurename{Fig.}
\caption{\scriptsize Total latency versus the number of cooperative satellites, where $L=3000$ km, frequency is within $\left[ {50,400} \right]$ THz, and frequency without optimization is at 200 THz.}
\label{Fig 5}
\end{center}
\end{figure}

\vspace{-2mm}
\section{Conclusions}
\label{conclusions}
In this paper, we developed an OISL channel model and derived its channel statistics.
Based on the channel model, we introduced a cooperative OISL system in which the number of cooperative satellites involved and laser frequency can be optimized to minimize the total latency.
Simulation results reveal the trade-offs in the design of the OISL system.
Note that the proposed analytical model applies to both current and more complicated OISL systems.
Future research can utilize the proposed model to assess the data rate and reliability of OISLs in more complex satellite mega-constellations, considering a pre-defined SINR threshold.
This paper examines OISL channel modeling while accounting for pointing errors. Future investigations could explore additional factors impacting OISL, such as satellite orbit perturbations, beam tracking, and angular fluctuations.

\ifCLASSOPTIONcaptionsoff
  \newpage
\fi
\vspace{-2mm}
\bibliographystyle{IEEEtran}
\bibliography{reference}

\end{document}